\documentclass{article}

\usepackage{authblk}
\usepackage{hyperref}
\usepackage{amssymb}
\usepackage{amsmath}

\usepackage{lineno}
\usepackage{eurosym}
\usepackage[ruled,vlined]{algorithm2e}
\usepackage{subfig}
\usepackage{url}
\usepackage{xcolor}
\usepackage[inline,shortlabels]{enumitem}

\usepackage{graphicx}%
\usepackage{multirow}%
\usepackage{amsmath,amssymb,amsfonts}%
\usepackage{amsthm}%
\usepackage{mathrsfs}%
\usepackage[title]{appendix}%
\usepackage{xcolor}%
\usepackage{textcomp}%
\usepackage{manyfoot}%
\usepackage{booktabs}%
\usepackage{listings}%
\usepackage[round-mode=places,round-precision=2,round-pad=false,group-separator={,},output-decimal-marker={.}]{siunitx}

\newcommand{\F}{\mathcal{F}}

\newcommand{\mathsc}[1]{{\normalfont\textsc{#1}}}

\newcommand{\sky}{\mathsc{Sky}}
\newcommand{\nd}{\mathsc{ND}}
\newcommand{\po}{\mathsc{PO}}

\newcommand{\TOP}{\mathsc{Top}}

\newcommand{\rank}{\mathsf{rank}}

\newcommand{\dominates}{\prec}

\newcommand{\fdominates}{\dominates_{\F}}

\newcommand{\dimensions}{d}
\newcommand{\monotoneFunctions}{\mathtt{MF}}

\newcommand{\positivereals}{\mathbb{R^+}}

\newcommand{\size}{N}

\newcommand{\uniform}{\texttt{UNI}\xspace}
\newcommand{\anticorrelated}{\texttt{ANT}\xspace}
\newcommand{\nba}{\texttt{NBA}\xspace}

\newcommand{\step}{\mu}

\newcommand{\mycomment}[1]{}

\newtheorem{theorem}{Theorem}

\newtheorem{definition}{Definition}
\newtheorem{example}[theorem]{Example}

\newcommand{\logSep}{\,.\,\,}

\def\codeif{\mbox{\upshape\textbf{if}}}

\def\codefor{\mbox{\upshape\textbf{for}}}

\def\codein{\mbox{\upshape\textbf{in}}}
\def\codewhile{\mbox{\upshape\textbf{while}}}
\def\codetrue{\mbox{\upshape\textbf{true}}}

\def\codereturn{\mbox{\upshape\textbf{return}}}
\def\codebreak{\mbox{\upshape\textbf{break}}}
\def\codecontinue{\mbox{\upshape\textbf{continue}}}

\providecommand{\keywords}[1]{\textbf{\textit{Keywords---}} #1}

\begin{document}

\title{Computing the Non-Dominated Flexible Skyline in Vertically Distributed Datasets with No Random Access}

\author{Davide Martinenghi}

\affil{Politecnico di Milano, DEIB\\Piazza Leonardo 32, 20133 Milan, Italy.\\email:
\url{davide.martinenghi@polimi.it}
}

\date{}

\maketitle

In today's data-driven world, algorithms operating with vertically distributed datasets are crucial due to the increasing prevalence of large-scale, decentralized data storage. These algorithms enhance data privacy by processing data locally, reducing the need for data transfer and minimizing exposure to breaches. They also improve scalability, as they can handle vast amounts of data spread across multiple locations without requiring centralized access.
Top-$k$ queries have been studied extensively under this lens, and are particularly suitable in applications involving healthcare, finance, and IoT, where data is often sensitive and distributed across various sources.
Classical top-$k$ algorithms are based on the availability of two kinds of access to sources: \emph{sorted access}, i.e., a sequential scan in the internal sort order, one tuple at a time, of the dataset; \emph{random access}, which provides all the information available at a data source for a tuple whose id is known.
However, in scenarios where data retrieval costs are high or data is streamed in real-time or, simply, data are from external sources that only offer sorted access, random access may become impractical or impossible, due to latency issues or data access constraints.
Fortunately, a long tradition of algorithms designed for the ``no random access'' (NRA) scenario exists for classical top-$k$ queries. Yet, these do not cover the recent advances in ranking queries, proposing hybridizations of top-$k$ queries (which are preference-aware and control the output size) and skyline queries (which are preference-agnostic and have uncontrolled output size).
The \emph{non-dominated flexible skyline} ($\nd$) is one such proposal, trying to get the best of top-$k$ and skyline queries. We introduce an algorithm for computing $\nd$ in the NRA scenario, prove its correctness and optimality within its class, and provide an experimental evaluation covering a wide range of cases, with both synthetic and real datasets.

\keywords{skyline, flexible skyline, top-$k$ query, random access}
\section{Introduction}
\label{intro}

In today's data-centric world, algorithms designed for vertically distributed datasets are vital due to the growing trend of large-scale, decentralized data storage. These algorithms enhance data privacy by processing information locally, thereby reducing the need for data transfers and minimizing the risk of breaches. They also improve scalability by efficiently managing large volumes of data distributed across various locations without relying on centralized access.
Top-$k$ queries (the fundamental tool to tackle multi-objective optimization by transforming the problem into a single-objective problem through a \emph{scoring function}), have been studied extensively under this lens, and are particularly suitable in applications involving healthcare, finance, and IoT, where data is often sensitive and distributed across various sources.
Classical top-$k$ algorithms, such as~\cite{DBLP:conf/pods/Fagin96}, are based on the availability of two kinds of access to sources: \emph{sorted access}, i.e., a sequential scan in the internal sort order, one tuple at a time, of the dataset; \emph{random access}, which provides all the information available at a data source for a tuple whose id is known.
However, in scenarios where data retrieval costs are high or data is streamed in real-time or, simply, data are from external sources that only offer sorted access, random access may become impractical or impossible, due to latency issues or data access constraints.

Fortunately, a long tradition of algorithms designed for the ``no random access'' (NRA) scenario exists for classical top-$k$ queries~\cite{DBLP:conf/pods/FaginLN01}. Yet, these do not cover the recent advances that have proposed hybridizations of top-$k$ queries and skyline queries~\cite{DBLP:conf/icde/BorzsonyiKS01} (the other common tool for multi-criteria analysis, based on the notion of \emph{dominance}), which try to get the best of both worlds.
Indeed, although both share the overall goal to identify the best objects in a dataset, the way they work is totally different: top-$k$ queries are preference-aware and control the output size, while skylines are preference-agnostic and have uncontrolled output size. \emph{Flexible skylines}~\cite{DBLP:journals/pvldb/CiacciaM17,DBLP:journals/tods/CiacciaM20} are a popular attempt at reconciling top-$k$ queries and skylines under a unifying perspective, and, in particular, the non-dominated (ND) flexible skyline is a preference-aware generalization of skylines that admits efficient implementations.
This paper refers to a further generalization, which introduces the $k$ parameter into skylines (called $k$-skybands~\cite{DBLP:journals/tods/PapadiasTFS05} and consisting of all tuples dominated by less than $k$ tuples), leading to the notion of non-$k$-dominated flexible skyline ($\nd_k$).

\begin{figure}
\centering
\hspace*{-0.20\textwidth}
\subfloat[][{Dataset in tabular form.\label{fig:example-table}}]
{ 
\scalebox{0.8}{\vbox to 0.45\textwidth{\vfil
   		\begin{tabular}{r|c|p{1cm}r|c|}
\multicolumn{2}{r}{$r_1$}&	&	\multicolumn{2}{r}{$r_2$}\\
\cline{2-2} \cline{5-5}
\texttt{\textit{a}}	&         3.0		& 	&	\texttt{\textit{i}}	&	    1.0\\
\texttt{\textit{d}}	&         4.0		& 	&	\texttt{\textit{g}}	&	    1.5\\
\texttt{\textit{h}}	&         5.0		& 	&	\texttt{\textit{e}}	&	    2.0\\
\texttt{\textit{f}}	&         6.0		& 	&	\texttt{\textit{c}}	&	    3.0\\
\texttt{\textit{e}}	&         6.0		& 	&	\texttt{\textit{b}}	&	    6.0\\
\texttt{\textit{c}}	&         7.0		& 	&	\texttt{\textit{h}}	&	    7.0\\
\texttt{\textit{b}}	&         8.0		& 	&	\texttt{\textit{a}}	&	    8.0\\
\texttt{\textit{i}}	&         8.0		& 	&	\texttt{\textit{d}}	&	    9.0\\
\texttt{\textit{g}}	&         9.0		& 	&	\texttt{\textit{f}}	&	    9.0\\
\cline{2-2} \cline{5-5}
   		\end{tabular}
   		\vfil
 		}
 		}
 		}%
\hspace*{-0.20\textwidth}
\subfloat[][{2D depiction of the locations.}]
{\includegraphics[width=0.49\textwidth]{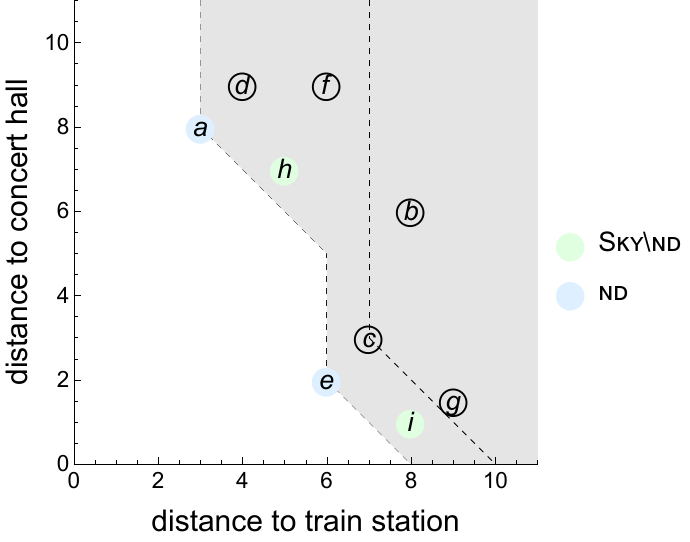}\label{fig:example}}%
 	\caption{A set of locations ranked by distance to given points of interest.}
 	\label{fig:sorted-lists}
 \end{figure}

\begin{example}
To exemplify the notion of flexible skyline, consider the dataset shown in Figure~\ref{fig:example-table}, showing two ranked lists of locations sorted by their distance from a given point of interest (say, train station and concert hall).
The only options that are not dominated by other options in this dataset are \texttt{a}, \texttt{h}, \texttt{e}, \texttt{i}, because no alternative is better than any of them on both distances; this is the skyline of the dataset.
In particular, each of these options is top for at least one monotone scoring function; for instance, \texttt{e} is the top option by average distance ($4$) to the points of interest, i.e., using function $f(x,y)=0.5 x+ 0.5 y$, where $x$ and $y$ are, respectively, the numeric attributes from $r_1$ and $r_2$.

Suppose now that we also consider a preference stating that, while we still want to linearly combine the distances, the distance to the train station is more important, i.e., we refer to the set $\F$ of scoring functions of the form $f(x,y)=w_1 x+ w_2 y$, where $w_1>w_2$.
Under this assumption, \texttt{i} and \texttt{h} are no longer ideal, as no scoring function of the stated form will ever rank them as the top solution; the options \texttt{a} and \texttt{e} are the non-dominated flexible skyline $\nd$ with respect to set $\F$. The gray area shown in Figure~\ref{fig:example} represents the region of points that can never become top-$1$, in this dataset, for any of the functions in $\F$; in particular, \texttt{i} can never beat \texttt{e} and \texttt{h} can never beat \texttt{a} with these functions.

The 2-skyband of the dataset includes all the tuples that are dominated by less than 2 tuples (i.e., 0 or 1) and consists of \texttt{d}, \texttt{c} and \texttt{g}, in addition to all the tuples in the skyline (\texttt{a}, \texttt{h}, \texttt{e}, \texttt{i}). This set contains all tuples that may be in the result set of any top-2 query using a monotone scoring function. For instance, with $f(x,y)=0.5 x+ 0.5 y$, the top-$2$ results are \texttt{e} and \texttt{i}.
However, if we consider the set $\F$, \texttt{g} is now, in a sense, also dominated by \texttt{c}, because no scoring function in $\F$ may rank it better than \texttt{c}, as shown with the dashed line traversing \texttt{c}.
The tuples in the $2$-skyband except for $\texttt{g}$ are the so-called non-$k$-dominated flexible skyline $\nd_k$ with respect to $\F$.
\end{example}

In this paper, after reconsidering the standard NRA approaches, we introduce a novel algorithm for computing the ND flexible skyline in the NRA scenario, prove its correctness and instance optimality (a very strong form of optimality in an I/O sense), and provide an experimental evaluation covering a wide range of datasets, both synthetic and real.

\section{Preliminaries}
\label{sec:prelim}

We refer to datasets with numeric attributes (namely, and without loss of generality, the non-negative real numbers $\positivereals$). A schema $S$ is a set of attributes $\{A_1,\ldots, A_\dimensions\}$, and a tuple $t=\langle v_1,\ldots,v_\dimensions\rangle$ over $S$ is a function associating each attribute $A_i\in S$ with a value $v_i$, also denoted $t[A_i]$, in $\positivereals$; a relation over $S$ is a set of tuples over $S$.

A \emph{scoring function} $f$ over $S=\{A_1,\ldots, A_\dimensions\}$ applies to the attribute values of a tuple $t$ over $S$ and associates $t$ with a numeric score $f(t[A_1],\ldots, t[A_\dimensions])\in \positivereals$, also indicated $f(t)$; $f$ is \emph{monotone} if, for all tuples $t$ and $s$ over $S$, we have
$(\forall A\in S \logSep t[A]\leq s[A])\rightarrow f(t) \leq f(s)$.

We shall consider attributes such as ``cost'', and thus prefer smaller values over higher ones; this is of course an arbitrary convention that could be changed and does not affect generality. Similarly, we also set our preference for lower scores (again, without loss of generality).

The \emph{rank} $\rank(t;r,f)$ of a tuple $t\in r$ according to scoring function $f$ is defined as 
$\rank(t;r,f) = 1 + |\{s \in r \mid f(s) < f(t) \}|$.
\begin{definition}\label{def:top-k}
A \emph{top-$k$ query} takes 
\emph{i)} a relation $r$ over a schema $S$, 
\emph{ii)} a scoring function $f$ over $S$ that totally orders $r$,
\emph{iii)} a positive integer $k$,
and returns the set $\TOP_k(r;f)$ of top-$k$ ranked tuples in $r$ according to $f$, i.e., $\TOP_k(r;f) = \{t \mid t\in r \land \rank(t;r,f)\leq k\}$.
\end{definition}
We assumed that $f$ orders $r$ totally so that no ties occur and the set returned by a top-$k$ query is univocally defined.
When $k=1$, the definition can be expanded as follows:
\begin{equation}\label{eq:top-one}
\TOP_1(r;f) = \{t \mid t\in r \land \forall s\in r\logSep s\neq t \implies f(s) > f(t)\}.
\end{equation}

A \emph{skyline} query~\cite{DBLP:conf/icde/BorzsonyiKS01} takes a relation $r$ as input and returns the set of tuples in $r$ that are dominated by no other tuples in $r$. Dominance is defined as follows.

\begin{definition}\label{def:skyline}
Let $t$ and $s$ be tuples over a schema $S$; $t$ \emph{dominates} $s$, denoted $t\dominates s$, if, for every attribute $A\in S$, $t[A]\leq s[A]$ holds and there exists an attribute $A' \in S$ such that $t[A']<s[A']$ holds.
The \emph{skyline} $\sky(r)$ of a relation $r$ over $S$ is the set $\{t \in r \mid \nexists s\in r \logSep s\dominates t\}$.
\end{definition}

Equivalently, the skyline can be identified as the set of tuples that are top-$1$ results for at least one monotone scoring function.
\begin{equation}\label{eq:skyline-alt}
\sky(r) = \{t \mid t\in r \land \exists f\in\monotoneFunctions
\logSep \forall s\in r\logSep s\neq t \implies f(s) > f(t)\},
\end{equation}
where $\monotoneFunctions$ indicates the set of all monotone functions.

Inspired by the similarity between the expanded definition of top-$1$ set given in Equation~\eqref{eq:top-one} and the alternative definition of skyline in Equation~\eqref{eq:skyline-alt}, \cite{DBLP:journals/pvldb/CiacciaM17} introduced the notion of \emph{flexible skyline}, observing that we have the set of all monotone scoring functions in the latter and just one function in the former. The generalization of both equations to the case where we have any set of monotone scoring functions gives rise to the so-called \emph{non-dominated flexible skyline}:

\begin{definition}\label{def:flexible-skyline}
Let $t$ and $s$ be tuples over a schema $S$ and $\F$ a set of monotone scoring functions over $S$; $t$ \emph{$\F$-dominates} $s$, denoted $t\fdominates s$, iff, $\forall f\in\F\logSep f(t)\leq f(s)$ and $\exists f\in\F\logSep f(t)< f(s)$.
The \emph{non-dominated flexible skyline} $\nd(r;\F)$ of a relation $r$ with respect to $\F$ is the set $\{t \in r \mid \nexists s\in r \logSep s\fdominates t\}$.
\end{definition}

Generally, $\nd(r; \F)\subseteq \sky(r)$, since $\F\subseteq \monotoneFunctions$.
Moreover, $\nd$ generalizes both skylines and top-$1$ queries, since $\sky(r)=\nd(r;\monotoneFunctions)$ and $\TOP_1(r;f)=\nd(r;f)$ (provided, again, that $f$ causes no ties, which are not discarded by $\nd$).
With this, \cite{DBLP:conf/cikm/CiacciaM18} introduced a further generalization of the $\nd$ operator to the general case $k \geq 1$.

\begin{definition}\label{def:nd-k}
Given a relation $r$ over $S$ and a set of monotone scoring functions $\F$ over $S$, the $\nd_k(r;\F)$ operator returns the set of tuples in $r$ that are $\F$-dominated by less than $k$ tuples:
\begin{equation}\label{eq:nd-k}
\nd_k (r; \F) = \{t \in r| \nexists t_1, \ldots, t_k \in r\logSep \mathtt{diff}(t_1,\ldots,t_k)\land t_1 \fdominates t \land \ldots \land t_k \fdominates t,
\end{equation}
where $\mathtt{diff}(t_1,\ldots,t_k)$ stands for $\forall i,j\logSep (1\leq i,j\leq k \land i\neq j)\rightarrow t_i\neq t_j$.
\end{definition}

Another common notion that extends skylines to the set of tuples dominated by less than $k$ other tuples is the so-called \emph{$k$-skyband}~\cite{DBLP:journals/tods/PapadiasTFS05}, which we indicate here as $\sky_k(r)$ and define as $\sky_k (r) = \{t \in r| \nexists t_1, \ldots, t_k \in r\logSep \mathtt{diff}(t_1,\ldots,t_k)\land t_1 \dominates t \land \ldots \land t_k \dominates t\}$.
Clearly, $\nd_k(r,\monotoneFunctions)=\sky_k(r)$ and, in general, $\nd_k(r,\F)\subseteq\sky_k(r)$.

Solutions for computing $\nd_k$ in a vertically distributed setting where both sorted and random access are available were proposed in~\cite{DBLP:conf/cikm/CiacciaM18} through an algorithm called \texttt{FSA} that generalizes two classical algorithms for computing top-$k$ queries: \texttt{FA} (Fagin's Algorithm~\cite{DBLP:conf/pods/Fagin96}) and \texttt{TA} (the Threshold Algorithm~\cite{DBLP:conf/pods/FaginLN01}).
All these algorithms were defined for the so-called ``middleware scenario'', in which relation $r$ over schema $S=\{Id, A_1, \ldots, A_\dimensions\}$ is vertically distributed across relations $r_1,\ldots,r_\dimensions$ such that, for all $i\in\{1,\ldots,\dimensions\}$, $r_i$ has a schema $\{Id,A_i\}$ and is sorted on $A_i$ in ascending order (i.e., from the best to the worst value). In other words, $Id$ is meant to be a tuple identifier that can be used for reconstructing tuples through random access (and joins), and each $r_i$ is a ranked list.

We are also going to consider this kind of distribution, but we are now focusing on settings in which random access is unavailable or too expensive to be viable.
For the ``no random access'' scenario, the relevant literature has described a general pattern, called \texttt{NRA}, working for the classical top-$k$ scenario~\cite{DBLP:conf/pods/FaginLN01}. The main idea is to proceed, in parallel on all ranked lists, from top to bottom through sorted access until we have seen enough tuples to be sure that proceeding with more accesses will not change the solution.
In particular, since a tuple may have been seen only on some but not all ranked lists, for each tuple we keep track of the worst (highest) and the best (lowest) possible scores: when at least $k$ tuples have a worst score that is better than a tuple $t$'s best score, then $t$ can be dismissed. More so, when the seen tuples with the $k$ best worst scores have better worst scores than the best scores of all other tuples (seen or unseen), then \texttt{NRA} stops. To determine that no unseen tuple can beat the current top-$k$ tuples, \texttt{NRA} watches the best possible score of the so-called \emph{threshold point} $\tau$, i.e., the (virtual) tuple with attribute values corresponding to the last seen values in every ranked list: no unseen tuple can have better values than those in $\tau$, since the lists are ranked; so, if $k$ tuples already beat the threshold point, then no unseen tuple can ever enter the final result.

This pattern, however, was defined for the classical top-$k$ scenario, with just one scoring function in mind. To the best of our knowledge, no algorithm exists that has addressed the ``no random access'' scenario for the case of the non-dominated flexible skyline and its extension $\nd_k$.

\section{Flexible NRA}
\label{sec:flexible-NRA}

The classical NRA scheme consists of the following main steps:
\begin{itemize}
	\item Access items sortedly, one at a time, on all ranked lists. This will unveil the value of a tuple on some list but possibly not in all lists, so that we may not be able to compute the overall score of a tuple. We can anyhow compute bounds expressing the best and worst possible values for such a score.
	\item Keep a buffer for all the seen tuples, and for all tuples compute a worst and a best bound on their overall score.\footnote{The original algorithm uses lower and upper bounds. For generality with respect to the adopted convention, we prefer here to talk about worst and best bound.}
	\item Also compute a threshold value for the overall score that may be attained by unseen tuples.
	\item Repeat until there are $k$ tuples in the buffer whose worst bound is no worse than the best bound of all the other seen tuples and the threshold.
	\item Return such $k$ objects.
\end{itemize}

Matters are much more complex in the case of $\nd_k$ because obtaining the bounds is more challenging and also because the output size is not limited to $k$ tuples as in a top-$k$ query.

\begin{algorithm}[t]
   \scalebox{.82}
   {
    \begin{minipage}{1.33\textwidth}
	\begin{enumerate}[topsep=0pt,itemsep=-1ex,partopsep=1ex,parsep=1ex]
	   \item[Input:] \emph{Ranked lists $r_1,\ldots,r_\dimensions$, scoring functions $\F$}
	   \item[Output:] $\nd_k(r;\F)$
		\item\label{line:init} $B:=\emptyset$ \quad\emph{// buffer of tuples}
        \item\label{line:growing-phase} $\codewhile$ lists not exhausted
        \item \quad $c$ := 0\quad\emph{// a counter of tuples $\F$-dominating the threshold point}
        \item \quad make a sorted access on $r_1,\ldots,r_\dimensions$ and insert/update extracted tuples in $B$
        
        \item \quad $\tau = \langle\ell_1,\ldots,\ell_\dimensions\rangle$ \quad\emph{// threshold point: last scores on every list }
        \item \quad $\codefor\ t\ \codein$ seen tuples
        \item \quad \quad $\codeif\ wb(t) \fdominates \tau$ \quad\emph{// worst bound of $t$ $\F$-dominates $\tau$}
        \item \quad \quad \quad \codeif\ ++$c$ = $k$ \quad\emph{// threshold $\F$-dominated by $k$ tuples}
        \item\label{line:growing-phase-end} \quad \quad \quad \quad \codebreak\ to Line~\ref{line:shrinking-phase}

        \item\label{line:shrinking-phase} $\codewhile\ \codetrue$ \quad\emph{// keep digging if at least one tuple can be $\F$-dominated by $k$ tuples}
        \item \quad remove from $B$ tuples $\F$-dominated by other $k$ tuples
        \item \quad \codefor\ $s$ \codein\ $B$ \quad\emph{// candidate non-$\F$-dominated tuples}
        \item \quad \quad $c$ := 0 \quad\emph{// a counter of non-$\F$-dominance relationships}
        \item \quad \quad \codefor\ $t$ in $B\setminus \{s\}$\quad\emph{// candidate non-$\F$-dominating tuples}
        \item \quad \quad \quad \codeif\
        $bb(t)\not\fdominates wb(s)$ \quad\emph{// best bound of $t$ does not $\F$-dominate worst bound of $s$}
        \item \quad \quad \quad \quad $c$++
        \item\label{line:keep-digging} \quad \quad \codeif\ $k \leq |B|-1 - c$ // if $k$ tuples may $\F$-dominate it, we keep deepening
        \item \quad \quad \quad make a sorted access on $r_1,\ldots,r_\dimensions$ and insert/update extracted tuples in $B$
		\item \quad \quad \quad \codecontinue\ to Line~\ref{line:shrinking-phase}
		\item\label{line:shrinking-phase-end} \quad \codebreak

		\item \codereturn\ $B$
            
	\end{enumerate}	    
	\end{minipage}
   }
	\caption{Algorithmic pattern for computing $\nd_k$.}
	\label{alg:nra-nd}
\end{algorithm}

Algorithm~\ref{alg:nra-nd} shows the pseudocode that we developed for the computation of $\nd_k$.
As recognized in the pertinent literature~\cite{DBLP:journals/tods/MamoulisYCC07}, an NRA-like algorithm goes through a ``growing phase'', during which all tuples that may contribute to the final result are collected, followed by a ``shrinking phrase'', which eliminates all tuples that are not part of the result.

Essentially, during the growing phase (Lines~\ref{line:growing-phase}--\ref{line:growing-phase-end}), tuples are accessed in parallel through sorted access (and inserted in a buffer $B$) for as long as needed. In particular, we can stop this phase when we have seen at least $k$ objects whose worst bound is $\F$-dominating the threshold point $\tau$. Since the threshold is the best bound for all unseen tuples, meeting this condition means that no unseen tuple can be part of the result.
This roughly correspond to the so-called sorted access phase of Fagin's Algorithm (\texttt{FA}); yet, unlike \texttt{FA}, here we cannot proceed with random access to complete the missing parts of the extracted tuples.

This leads us to the shrinking phase (Lines~\ref{line:shrinking-phase}--\ref{line:shrinking-phase-end}), which aims to remove all tuples that are not part of the final result.
To do this, for each tuple $t$ in the buffer $B$, we keep track of how many tuples $\F$-dominate $t$ and also of how many cannot $\F$-dominate $t$. With this, if at least $k$ tuples $\F$-dominate $t$, we remove $t$ from $B$. Instead, if a tuple $s$ is not removed and there are still at least $k$ tuples that might $\F$-dominate it, then we need to continue doing sorted access, so as to discover new missing pieces of the tuples in $B$ and update the bounds. The condition on Line~\ref{line:keep-digging} expresses precisely this: the number of surviving tuples excluding $s$, i.e., $B-1$, minus the number $c$ of those tuples that do not $\F$-dominate $s$ is greater than or equal to $k$. Clearly, tuple $t$ cannot $\F$-dominate tuple $s$ if the best possible completion of $t$ does not $\F$-dominate the worst possible completion of $s$ (by completion of a tuple $t$, we mean here the tuple whose attribute values are the same as $t$'s, when available, and then the best or worst still possible for that attribute, if not available in $t$).

At the end of the shrinking phase, we are left with tuples that cannot be $\F$-dominated by $k$ or more tuples, so we have our final result.

\begin{theorem}
Algorithm~\ref{alg:nra-nd} correctly computes $\nd_k$.
\end{theorem}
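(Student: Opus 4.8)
The plan is to prove that the buffer $B$ returned by Algorithm~\ref{alg:nra-nd} equals $\nd_k(r;\F)$ by establishing the two inclusions separately, relying throughout on one elementary monotonicity fact about the bounds. Write $t$ for the actual (completed) tuple and $bb(t),wb(t)$ for its current best and worst completions; since the missing coordinates of $bb(t)$ take the last seen value $\ell_i$ and those of $wb(t)$ the largest still admissible value, we have $bb(t)[A]\le t[A]\le wb(t)[A]$ on every attribute $A$, so monotonicity gives $f(bb(t))\le f(t)\le f(wb(t))$ for every $f\in\F$. Likewise any still-unseen tuple $u$ satisfies $u[A_i]\ge\ell_i=\tau[A_i]$, hence $f(\tau)\le f(u)$ for every monotone $f$. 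Chaining these inequalities and tracking the strict one yields three transfer lemmas, the only places where the semantics of the bounds enters: \emph{(i)} $wb(t)\fdominates\tau$ implies $t\fdominates u$ for every unseen $u$; \emph{(ii)} $wb(t)\fdominates bb(s)$ implies $t\fdominates s$ on the actual tuples; and \emph{(iii)} conversely, $t\fdominates s$ on the actual tuples implies $bb(t)\fdominates wb(s)$.

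Using \emph{(i)} I would first argue $\nd_k(r;\F)\subseteq B$. When the growing phase stops at Line~\ref{line:growing-phase-end} the counter has reached $k$, i.e., there are $k$ distinct seen tuples $t_1,\dots,t_k$ with $wb(t_j)\fdominates\tau$; by \emph{(i)} each $t_j$ $\F$-dominates every unseen tuple, so no unseen tuple belongs to $\nd_k(r;\F)$, and therefore $\nd_k(r;\F)$ is contained in the seen tuples, i.e., in $B$ (if the lists are exhausted first, $B$ already holds all tuples and the inclusion is trivial). This inclusion is preserved during the shrinking phase: a tuple $s$ is deleted only when at least $k$ distinct tuples $t$ pass the certified-domination test $wb(t)\fdominates bb(s)$, and by \emph{(ii)} this entails genuine $\F$-domination of $s$ by $k$ distinct tuples of $r$, so no member of $\nd_k(r;\F)$ is ever removed; inserting further tuples cannot violate the inclusion either.

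For the reverse inclusion $B\subseteq\nd_k(r;\F)$ at termination, I would examine the exit at Line~\ref{line:shrinking-phase-end}, reached only when the inner loop completes without triggering the continue of Line~\ref{line:keep-digging}: for every surviving $s\in B$ we have $|B|-1-c<k$, where $c$ counts the tuples $t$ with $bb(t)\not\fdominates wb(s)$; equivalently, fewer than $k$ tuples $t\in B$ satisfy $bb(t)\fdominates wb(s)$. By the contrapositive of \emph{(iii)}, every tuple of $r$ that actually $\F$-dominates $s$ satisfies $bb(t)\fdominates wb(s)$ and is thus counted among these, so $s$ is $\F$-dominated by fewer than $k$ tuples of $B$. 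It remains to rule out extra dominators among the unseen tuples, which is the crux. Since the loop kept digging whenever $|B|-1-c\ge k$, and since a missing coordinate of $s$ on which some $f\in\F$ depends makes $wb(s)$ infinite there and hence $bb(t)\fdominates wb(s)$ true for all $t$ (forcing $c$ low and more digging unless $|B|\le k$), a candidate can survive the non-trivial case only once all its $\F$-relevant coordinates are determined. For such an $s$ each relevant $A_i$ was seen at a value $s[A_i]\le\ell_i=\tau[A_i]$, so any unseen $u$ has $u[A_i]\ge\tau[A_i]\ge s[A_i]$ on every coordinate that matters for $\F$, giving $f(u)\ge f(s)$ for all $f\in\F$; hence $u$ cannot $\F$-dominate $s$. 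Thus the true number of $\F$-dominators of $s$ in all of $r$ is below $k$, i.e., $s\in\nd_k(r;\F)$.

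Finally, termination follows because $r$ is finite: sorted access eventually exhausts every list, at which point all bounds are exact ($bb(t)=wb(t)=t$), the removal step becomes the exact $\nd_k$ test, and the keep-digging condition can no longer fire, so the algorithm must reach Line~\ref{line:shrinking-phase-end}. The step I expect to be the main obstacle is precisely the last part of the reverse inclusion: coupling the growing-phase guarantee (every unseen tuple is already $k$-dominated) with the shrinking-phase bound comparisons to certify that no unseen tuple is an extra $\F$-dominator of a surviving candidate, i.e., proving that the dominator count taken only over the buffer $B$ faithfully reflects the count over the whole relation $r$. Some care is also needed in the degenerate case $|B|\le k$, where no tuple can be $k$-dominated and every surviving tuple is trivially in the result.
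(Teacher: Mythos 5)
Your proof has the same skeleton as the paper's: the threshold argument at the end of the growing phase shows that no unseen tuple can belong to $\nd_k(r;\F)$, and the exit condition of the shrinking phase is used for the inclusion $B\subseteq\nd_k(r;\F)$. The difference is one of rigor rather than route: the three bound-transfer lemmas and, above all, the argument that an unseen tuple cannot $\F$-dominate a surviving candidate (because such a candidate must have its $\F$-relevant coordinates resolved at values no worse than $\tau$'s) are exactly the content that the paper compresses into the single sentence ``the condition on Line~\ref{line:keep-digging} guarantees that $B$ does not contain any tuple that is $\F$-dominated by at least $k$ others,'' so you are filling a real gap rather than duplicating the printed proof. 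One loose end remains in your accounting for $B\subseteq\nd_k(r;\F)$: a surviving $s$ could in principle have $\F$-dominators that were \emph{seen but removed} from $B$ during the shrinking phase; these are neither among the fewer-than-$k$ certified dominators in the final buffer nor covered by your unseen-tuple argument, and the same issue underlies the degenerate case $|B|\le k$ that you flag but do not resolve. Both close with one observation you already have the ingredients for: the set $D$ of $\F$-dominators of $s$ is upward closed under $\fdominates$ (by transitivity), so taking a linear extension of $\fdominates$ on $D$ shows that if $|D|\ge k$ then at least $k$ elements of $D$ lie in $\nd_k(r;\F)$; by your first inclusion these are all seen and never removed, hence all certified against $s$ at the final check, contradicting the exit condition. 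The paper's proof leaves this point implicit as well, so this is a refinement to add, not a sign that your approach fails.
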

\begin{proof}
The condition on Line~\ref{line:keep-digging} guarantees that $B$ does not contain any tuple that is $\F$-dominated by at least $k$ others. Therefore $B \subseteq \nd_k(r; \F)$.
In order to show that $B$ coincides with $\nd_k(r; \F)$, we need to prove that no object in $r\setminus B$ belongs to $\nd_k(r; \F)$.
Indeed, when exiting the first $\codewhile$ loop (growing phase), there are $k$ different tuples $t_1, \ldots, t_k$ $\F$-dominating the threshold $\tau$. If $t'$ is a tuple in $r\setminus B$, $t'$ cannot exceed $\tau$'s scores on any of the ranked lists, since $t'$ has not yet been met by sorted access. Therefore $t'$ is necessarily also $\F$-dominated by $t_1, \ldots, t_k$,
thus it cannot belong to $\nd_k(r; \F)$.
\end{proof}

As for performance, in top-$k$ contexts this aspect is commonly referring to the ``depth'', i.e., the number of sorted accesses made on each ranked list as an indication of the cost incurred by an algorithm.
Let $\mathtt{depth}(A,I,i)$ indicate the depth reached on $r_i$ by algorithm $A$ before returning a solution to problem $I$.
We define $\mathtt{sumDepths}(A, I )$ as $\sum_{i=1}^\dimensions \mathtt{depth}(A, I, i)$.
It is also common to assume that only tuples seen via sorted access can be returned as part of the result (no ``wild guesses'').
The notion of instance optimality characterizes those algorithms that cannot be beaten by an arbitrarily large amount by other algorithms solving the same problem. To this end, let $\mathbf{A}$ be the class of correct algorithms for computing $\nd_k$ with no wild guesses and no random access; let $\mathbf{I}$ be the set of all instances of $\nd_k$ problems.
We say that $A$ is instance optimal over $\mathbf{A}$ and $\mathbf{I}$ for the $\mathtt{sumDepths}$ cost metric if there exist constants $c_1$ (called \emph{optimality ratio}) and $c_2$ such that $\forall A'\in A\logSep \forall I\in\mathbf{I}\logSep \mathtt{sumDepths}(A, I)\leq c_1\cdot \mathtt{sumDepths}(A', I) + c_2 $.
We can now show that Algorithm~\ref{alg:nra-nd} is instance optimal for computing $\nd_k$ with no random access.

\begin{theorem}
Let $\mathbf{A}$ be the class of correct algorithms for computing $\nd_k$ with no wild guesses and no random access; let $\mathbf{I}$ be the set of all $\nd_k(r,\F)$ problems, where $\F$ is a set of monotone scoring functions.
Then Algorithm~\ref{alg:nra-nd} is instance optimal over $\mathbf{A}$ and $\mathbf{I}$.
\end{theorem}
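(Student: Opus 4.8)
The plan is to follow the classical instance-optimality argument for no-random-access algorithms, adapted to the $\nd_k$ setting. First I would note that Algorithm~\ref{alg:nra-nd} advances in lock-step: every sorted-access step reads exactly one tuple from each of the $\dimensions$ lists, so if the algorithm halts after reaching depth $d_A$, then $\mathtt{depth}(A,I,i)=d_A$ for every $i$ and hence $\mathtt{sumDepths}(A,I)=\dimensions\cdot d_A$. The goal is therefore to show that every correct, wild-guess-free, random-access-free algorithm $A'$ must reach depth at least $d_A$ on at least one list; this immediately yields $\mathtt{sumDepths}(A,I)=\dimensions\cdot d_A\leq \dimensions\cdot\max_i\mathtt{depth}(A',I,i)\leq\dimensions\cdot\mathtt{sumDepths}(A',I)$, i.e.\ instance optimality with optimality ratio $c_1=\dimensions$ and additive constant $c_2$ absorbing boundary effects.

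The core is an \emph{ambiguity lemma}: at depth $d_A-1$, i.e.\ one parallel step before the algorithm stops, the value of $\nd_k(r;\F)$ is not yet determined by the data seen so far. Concretely, I would exhibit two instances $I_1,I_2$ that agree with $I$ on every (tuple, list, value) observed up to depth $d_A-1$, keep each list sorted with unseen values no smaller than the last seen values $\ell_i$, yet satisfy $\nd_k(I_1;\F)\neq\nd_k(I_2;\F)$. Since the algorithm has not stopped at depth $d_A-1$, exactly one of two situations holds. If the growing phase is not yet complete, then fewer than $k$ seen tuples $\F$-dominate the threshold $\tau=\langle\ell_1,\ldots,\ell_\dimensions\rangle$; I place a still-unseen tuple $u$ as the next element on every list with value $\ell_i$, so that $u$ coincides with $\tau$ and is $\F$-dominated by exactly those fewer-than-$k$ tuples, giving $u\in\nd_k(I_1;\F)$, whereas in $I_2$ I push $u$ and all remaining tuples to large values so that $u$ is $\F$-dominated by at least $k$ seen tuples, giving $u\notin\nd_k(I_2;\F)$. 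If instead the shrinking phase has not finished, some surviving tuple $s$ still has at least $k$ potential $\F$-dominators, i.e.\ tuples $t$ with $bb(t)\fdominates wb(s)$; since $s$ was not removed, fewer than $k$ of them $\F$-dominate $s$ under every completion, so I can complete $k$ of them to their best values (making $s$ $\F$-dominated by $k$ tuples, hence $s\notin\nd_k(I_1;\F)$), or complete the non-forced ones to their worst values (leaving fewer than $k$ dominators, hence $s\in\nd_k(I_2;\F)$). In both phases all other unseen values are set large, so the membership of the single witness ($u$ or $s$) is the only thing that flips.

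With the lemma in hand, the lower bound follows by the standard indistinguishability argument. Suppose $A'$ halted on $I$ having read to depth at most $d_A-1$ on every list. Then $A'$ observed a subset of the triples that Algorithm~\ref{alg:nra-nd} observed at depth $d_A-1$, and in particular its observations coincide on $I_1$ and $I_2$. Being deterministic and making no wild guesses, $A'$ performs identical accesses and returns the same set of seen tuples on both instances; but $\nd_k(I_1;\F)\neq\nd_k(I_2;\F)$, so $A'$ is incorrect on at least one of them, contradicting $A'\in\mathbf{A}$. Hence $A'$ reaches depth at least $d_A$ on some list, completing the chain of inequalities above.

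I expect the main obstacle to be the rigorous construction of $I_1$ and $I_2$ within the fixed-tuple middleware model: unlike a free adversary, I may not invent tuples but only choose the still-unseen attribute values, and these choices must simultaneously respect the ascending sorted order on every list (so that the witness tuple can legitimately sit just below $\ell_i$ in one instance yet deep in the other) and must not accidentally alter the status of already-seen tuples in a way that equalizes the two answers. Handling ties at value $\ell_i$, the possibility that a list becomes exhausted before depth $d_A$, and the degenerate case where almost all tuples have already been seen are the places where care is needed; these boundary phenomena only ever cost an additive amount and are therefore absorbed into $c_2$. A secondary point to verify is the tightness of the stopping rule, namely that Algorithm~\ref{alg:nra-nd} really does halt at the first depth at which the answer becomes determined, which is exactly what the ambiguity lemma certifies.
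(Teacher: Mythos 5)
Your proof goes by a different route than the paper's. The paper argues in the ``upper bound'' direction: fix any correct $A'$ with halting depths $\delta_1,\ldots,\delta_\dimensions$, observe that at depth $\delta=\max_i\delta_i$ Algorithm~\ref{alg:nra-nd} has seen everything $A'$ saw and its threshold point is no better than $A'$'s, and then use the correctness of $A'$ to conclude that $A'$'s threshold point must already be $\F$-dominated by $k$ seen tuples, so Algorithm~\ref{alg:nra-nd}'s stopping condition is met by depth $\delta$; the ratio $\dimensions$ falls out of $\mathtt{sumDepths}(\mbox{Alg.~\ref{alg:nra-nd}},I)\leq\dimensions\delta$ versus $\mathtt{sumDepths}(A',I)\geq\delta$. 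You argue in the dual ``lower bound'' direction: an ambiguity lemma showing that one lock-step before Algorithm~\ref{alg:nra-nd} halts the answer is still undetermined, hence every correct algorithm must reach depth $d_A$ on some list. Both yield ratio $\dimensions$, and your version, if completed, is strictly more informative (it certifies that the stopping rule is \emph{tight}, i.e., the algorithm halts at the first uniform depth at which the answer is determined), whereas the paper's argument only needs the stopping condition to be \emph{implied} by what any correct algorithm must have seen. Note also that the paper's step ``since $A'$ is correct, its threshold point is $\F$-dominated by $k$ seen tuples'' is itself a compressed adversary argument of exactly the kind you make explicit, so your proof can be seen as unpacking and strengthening the paper's.

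The price of your route is that the ambiguity lemma must hold for \emph{both} stopping conditions, and this is where your sketch has a genuine soft spot. Your shrinking-phase construction is essentially sound (completing the $k$ potential dominators to their best bounds at $\ell_i$ and $s$ to large values, versus the reverse, both respect sortedness and flip only $s$'s membership). But the growing-phase witness requires a tuple of $r$ that is still \emph{wholly} unseen and can legitimately sit at $\tau$; if every tuple has already appeared on at least one list, no such $u$ exists and you must instead argue ambiguity through a partially seen tuple, whose best completion is constrained by its already-revealed coordinates. Relatedly, the growing loop's guard is ``lists not exhausted'', so when the $k$-domination condition is never met the algorithm scans to the very end: your lemma must then hold at \emph{every} intermediate depth, and the assertion that such boundary phenomena ``only ever cost an additive amount absorbed into $c_2$'' is not justified as stated --- if ambiguity failed at some depth well before exhaustion, the gap would be multiplicative, not additive. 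These are fixable (the paper's own argument suggests the fix: compare against the hypothetical correct algorithm rather than constructing witnesses from scratch), but as written they are the missing steps that separate your plan from a complete proof.
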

\begin{proof}
Consider any instance $I = \langle r, \F, k\rangle$ and any algorithm $A \in\mathbf{A}$. Let $\delta_i$, $1 \leq i \leq \dimensions$, be the depth reached by $A$ on list $r_i$ when $A$ halts on $I$.
Since $A$ does not make wild guesses, all tuples seen by $A$ have been extracted by sorted access.
The best possible tuple $t$ not seen by $A$ coincides with the threshold point $\tau_A$  when $A$ halts.
Since $A$ is correct, $t$ is $\F$-dominated by at least $k$ (seen) tuples in the result.
Now, when Algorithm~\ref{alg:nra-nd} reaches depth $\delta = max\{\delta_1,\ldots,\delta_\dimensions\}$, it will have seen all tuples seen by $A$ and the corresponding threshold point $\tau^*$ will be either coinciding with or $\F$-dominated by $\tau_A$. Therefore $\tau^*$ is also necessarily $\F$-dominated by $k$ seen tuples.
This means that Algorithm~\ref{alg:nra-nd}'s stopping condition is met and, therefore, that $\mathtt{sumDepths}(\mbox{Algorithm~\ref{alg:nra-nd}}, I)\leq \delta\cdot \dimensions$, while $\mathtt{sumDepths}(A, I)=\sum^d_{i=1} \delta_i\geq\delta$,
i.e., Algorithm~\ref{alg:nra-nd} is instance optimal with an optimality ratio of $\dimensions$.
\end{proof}

We observe that our focus here is not on efficiently checking $\F$-dominance, which is used in Algorithm~\ref{alg:nra-nd} as a black box. The problem of testing $\F$-dominance has been studied extensively in~\cite{DBLP:journals/pvldb/CiacciaM17,DBLP:conf/cikm/CiacciaM18,DBLP:journals/tods/CiacciaM20} for specific classes of scoring functions and sets thereof defined by means of linear constraints on weights. We will use these results and these constraints in our experiments.

\section{Experiments}
\label{sec:experiments}

In this section, we test our implementation of Algorithm~\ref{alg:nra-nd} on a number of scenarios.

The datasets we use for the experiments comprise both synthetic and real datasets.
For synthetic datasets, we produce $\dimensions$-dimensional datasets of varying sizes and distributions according to the configurations displayed in Table~\ref{tab:operating_parameters}.
The class $\anticorrelated$ comprises datasets with values anti-correlated across different dimensions, while $\uniform$ has uniformly distributed values.

In addition to synthetic datasets, we also ran our experiments against a real dataset, $\nba$, consisting of a 2D selection of stats for $\num{190862}$ NBA games from \url{nba.com}.

In our experiments, we mainly focus on the number of $\F$-dominance tests required to compute the result and the depth incurred by the algorithm.
Besides these objective measures, we also measure execution times on a machine sporting an 8-Core Intel Core i9 with 32GB of RAM.

\begin{table}[h]
   \centering
   \caption{Operating parameters for testing efficiency (defaults in bold).}
       \begin{tabular}{|l|l|}
           \hline
               Name                           & Tested value \\
           \hline
               Distribution                        & synthetic: \uniform, \anticorrelated ; real: \nba
                \\
               Synthetic dataset size ($\size$)                & 10K, 50K, \textbf{100K}, 500K, 1M \\
               \# of dimensions ($\dimensions$)    & \textbf{2}, 3, 4\\
                $k$         & 1, 2, 5, \textbf{10}, 20, 50, 100 \\
                Spread ($\varepsilon$)  & \textsf{none}, \textbf{1\%}, 2\%, 5\%, 10\%, 20\%, 50\%, \textsf{full}\\
                Batch size ($\step$)                        & 1, 10, \textbf{100}, 1000 \\
           \hline
       \end{tabular}
   \label{tab:operating_parameters}
\end{table}

In addition to the dataset $\size$, its dimensionality $\dimensions$, and its distribution, we also study the effect of $k$ in the $\nd_k$ operator, of the constraints used for defining the set of functions $\F$, and the granularity of the sorted accesses (which may function in batches of $\step$ accesses).

For the constraints, we adopt the so-called \emph{ratio bounds constraints}, i.e., constraints of the form:
\begin{equation}\label{eq:ratio-bounds}
 \bar{w}_i(1-\varepsilon)\leq w_i\leq\bar{w}_i(1+\varepsilon),
\end{equation}
where we set $\bar{w}_i=1/\dimensions$, applied to linear scoring functions with weights $w_1,\ldots,w_\dimensions$. The values of $\varepsilon$ used in our tests are reported in Table~\ref{tab:operating_parameters}; in particular, when $\varepsilon=0$ (\textsf{none}) the weights are univocally defined and $\nd_k$ coincides with a top-$k$ query, whereas when $\varepsilon$ can vary freely (\textsf{full}), $\nd_k$ coincides with the $k$-skyband.

\begin{figure}%
\centering
\subfloat[][{\texttt{depth}}]
{\includegraphics[width=0.33\textwidth]{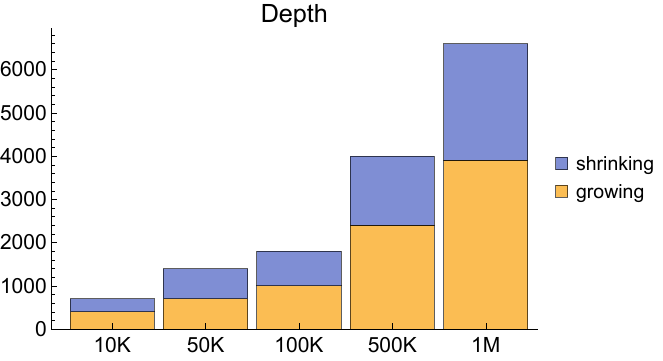}\label{fig:uni-varyingSize}}%
\subfloat[][{$\F$-dominance tests}]
{\includegraphics[width=0.33\textwidth]{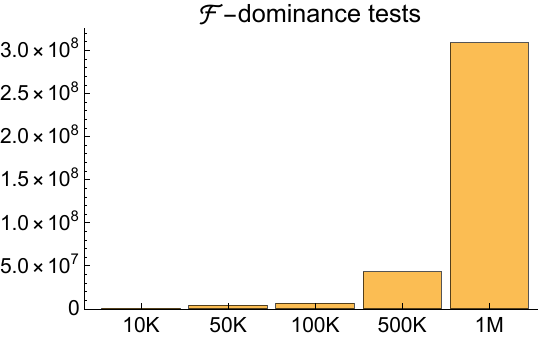}\label{fig:uni-varyingSize-fdom}}%
\subfloat[][{Time}]
{\includegraphics[width=0.33\textwidth]{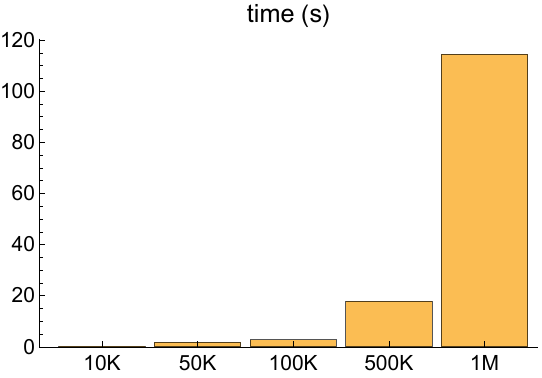}\label{fig:uni-varyingSize-time}}%
\caption{Depth (\ref{fig:uni-varyingSize}), $\F$-dominance tests (\ref{fig:uni-varyingSize-fdom}) and time ((\ref{fig:uni-varyingSize-time})) as size $\size$ varies on \uniform.}\label{fig:varyingSize}%
\end{figure}

\medskip

\medskip
\noindent\textbf{Varying the dataset size $\size$.}
The effect of the dataset size on the computation of $\nd_k$ through Algorithm~\ref{alg:nra-nd} is illustrated in Figure~\ref{fig:varyingSize}.
We varied the dataset size on \uniform using the values indicated in Table~\ref{tab:operating_parameters} and default values (indicated in bold in the table) for all other parameters. Figure~\ref{fig:uni-varyingSize} reports stacked bars for each of the tested dataset sizes, in which the lower part refers to the depth reached at the end of the growing phase, while the top part indicates the additional depth incurred during the shrinking phase.
We observe that, for datasets with uniformly distributed values, such as $\uniform$, the depth grows less than linearly with the dataset size, varying from around $7\%$ with $\size=10$K to around $0.7\%$ with $\size=1$M.
Figure~\ref{fig:uni-varyingSize-fdom} shows the number of $\F$-dominance tests that were executed in order to find the result. In this case, we can see the effect of the quadratic nature of skyline-based operators such as $\nd_k$, with the number of tests varying from $\num{387810}$ with $\size=10$K to $\num{308877008}$ with $\size=1$M.
Execution times are essentially related to the number of $\F$-dominance tests, which are the most expensive operation in the process. Figure~\ref{fig:uni-varyingSize-time} shows that such times, in seconds, vary from $0.2$s with $\size=10$K to $114.3$ with $\size=1$M, i.e., there is a growth of nearly 3 orders of magnitude, as also observed for the number of $\F$-dominance tests.

\begin{figure}%
\centering
\subfloat[][{\texttt{depth}}]
{\includegraphics[width=0.33\textwidth]{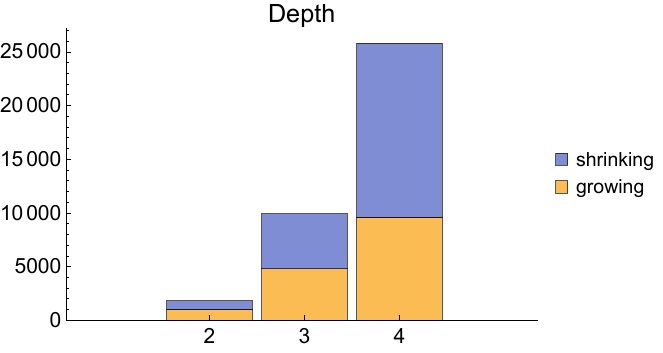}\label{fig:uni-varyingD-depth}}%
\subfloat[][{$\F$-dominance tests}]
{\includegraphics[width=0.33\textwidth]{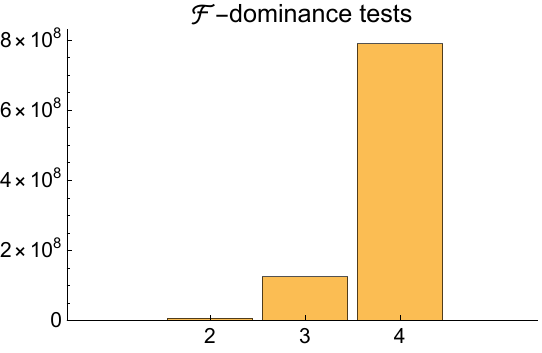}\label{fig:uni-varyingD-fdom}}%
\subfloat[][{Time}]
{\includegraphics[width=0.33\textwidth]{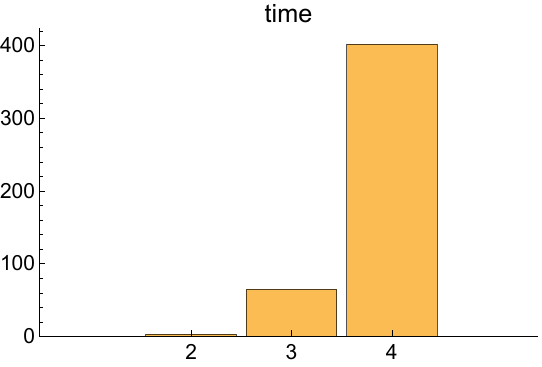}\label{fig:uni-varyingD-time}}%
\caption{Depth (\ref{fig:uni-varyingD-depth}), $\F$-dominance tests (\ref{fig:uni-varyingD-fdom}) and time (\ref{fig:uni-varyingD-time}) as $\dimensions$ varies on \uniform.}\label{fig:varyingD}%
\end{figure}

\medskip
\noindent\textbf{Varying the number of dimensions $\dimensions$.}
Figure~\ref{fig:varyingD} shows the effect of the number of dimensions $\dimensions$ on our measurements. Algorithms based on a ``no random access'' (NRA) policy heavily suffer from the so-called curse of dimensionality, since an increased number of dimensions entails less likely dominance (and $\F$-dominance) relationships, with result sets growing larger and larger. In such cases, an NRA policy essentially mandates a full scan of the dataset, since stopping criteria are met no earlier than that, thereby defeating the very purpose of ``early exit'' top-$k$ algorithms exploiting the ranking inherent in the vertically distributed sources. For these reasons, we limited our analysis to low values of $\dimensions$ (2, 3, and 4).
While the charts in Figure~\ref{fig:varyingD} are analogous to those in Figure~\ref{fig:varyingSize}, here we see that the effect of augmenting $\dimensions$ is heavier on the depth, which reaches $25\%$ with 4 dimensions, while it was just $1.8\%$ with $\dimensions =2$.
The larger number of involved tuples, with higher values of $\dimensions$, consequently entails larger numbers of $\F$-dominance tests and longer execution times, as can be seen in Figures~\ref{fig:uni-varyingD-fdom} and~\ref{fig:uni-varyingD-time}.

\begin{figure}%
\centering
\subfloat[][{\texttt{depth}}]
{\includegraphics[width=0.33\textwidth]{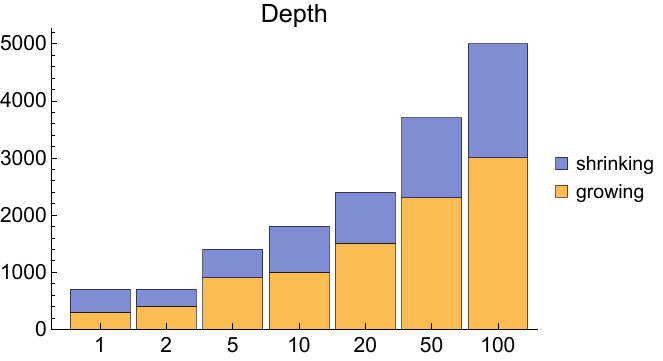}\label{fig:uni-varyingK-depth}}%
\subfloat[][{$\F$-dominance tests}]
{\includegraphics[width=0.33\textwidth]{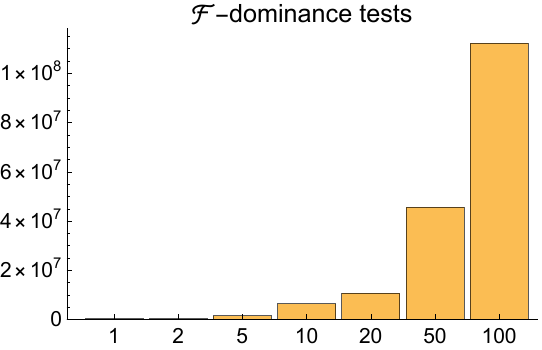}\label{fig:uni-varyingK-fdom}}%
\subfloat[][{Time}]
{\includegraphics[width=0.33\textwidth]{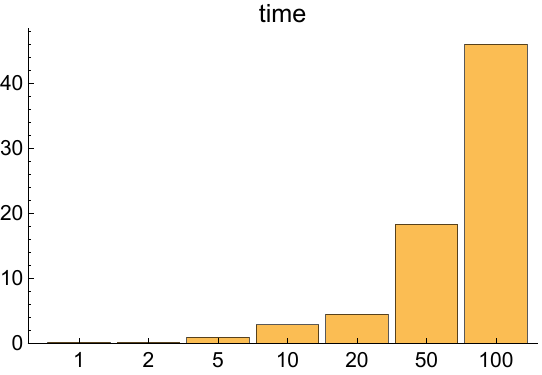}\label{fig:uni-varyingK-time}}%
\caption{Depth (\ref{fig:uni-varyingK-depth}), $\F$-dominance tests (\ref{fig:uni-varyingK-fdom}) and time (\ref{fig:uni-varyingK-time}) as $k$ varies on \uniform.}\label{fig:varyingK}%
\end{figure}

\begin{figure}%
\centering
\subfloat[][{Output size}]
{\includegraphics[width=0.33\textwidth]{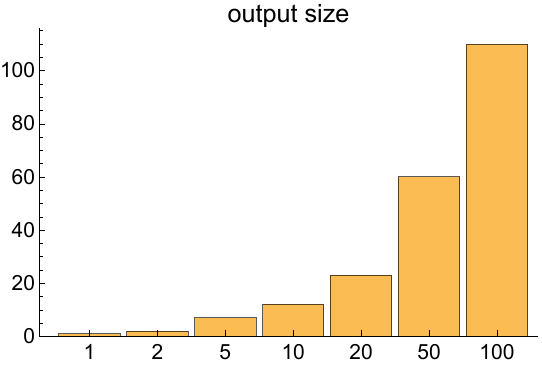}\label{fig:uni-varyingK-output}}%
\subfloat[][{Tuples after growing phase}]
{\includegraphics[width=0.33\textwidth]{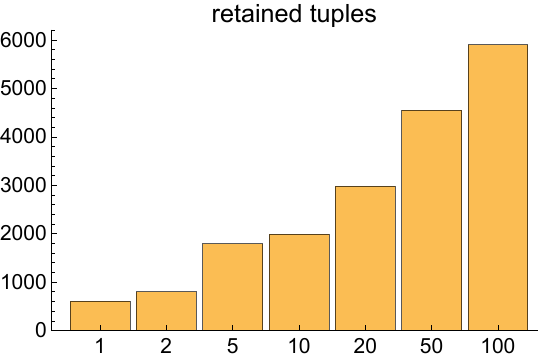}\label{fig:uni-varyingK-retained}}%
\caption{Output siez (\ref{fig:uni-varyingK-output}) and number of tuples retained after the growing phase (\ref{fig:uni-varyingK-retained}) as $k$ varies on \uniform.}\label{fig:varyingK-extra}%
\end{figure}

\medskip
\noindent\textbf{Varying $k$.}
Figure~\ref{fig:varyingK} shows the effect of $k$ on $\uniform$. While $k$ is not an exact output size in the case of $\nd_k$, it can be considered as the initial output size, which applies when $\F$ contains just one function.
We observe here that the depth grows from $0.7\%$ when $k=1$ to $5\%$ when $k=100$, i.e., less than linearly as $k$ grows.
The number of $\F$-dominance tests and the executions times are, again, tightly connected and mainly depend on the number of tuples that are retained in the growing phase and that, consequently, might need to be removed in the shrinking phase.
To this end, Figure~\ref{fig:varyingK-extra} shows how the number of retained tuples varies from right after the growing phase, i.e., when the buffer has its largest size, shown in Figure~\ref{fig:uni-varyingK-retained}, to the end of the execution, when the buffer contains the final result, whose size $|\nd_k|$ is shown in Figure~\ref{fig:uni-varyingK-output}. With our default spread value $\varepsilon=0.01$, the output size does not grow too much larger than $k$, topping $k+10$ for $k\geq 50$. Instead, the number of tuples retained at the end of the growing phase goes from just $599$ for $k=1$ to $5898$ for $k=100$, thus causing the steep increase in the number of $\F$-dominance tests shown in Figure~\ref{fig:uni-varyingK-fdom}.

\begin{figure}%
\centering
\subfloat[][{\texttt{depth}}]
{\includegraphics[width=0.33\textwidth]{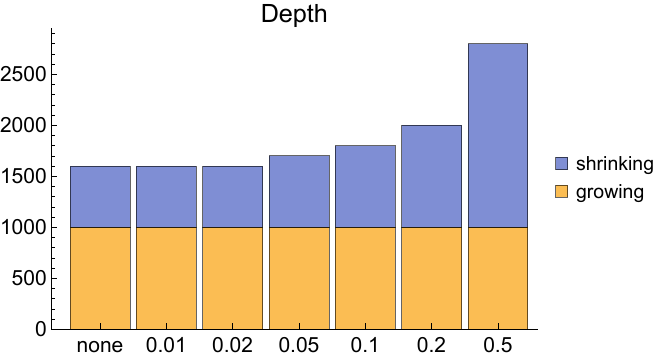}\label{fig:uni-varyingEps-depth}}%
\subfloat[][{$\F$-dominance tests}]
{\includegraphics[width=0.33\textwidth]{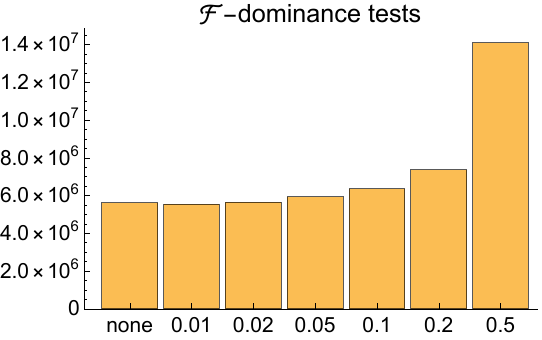}\label{fig:uni-varyingEps-fdom}}%
\subfloat[][{Time}]
{\includegraphics[width=0.33\textwidth]{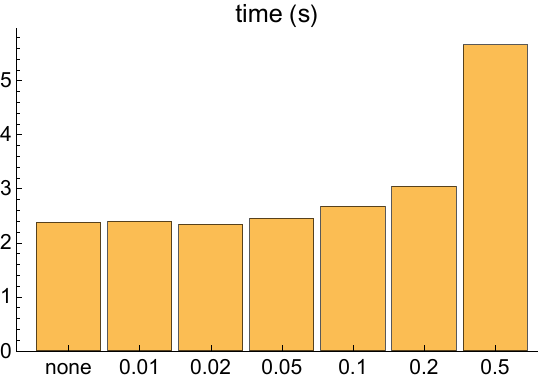}\label{fig:uni-varyingEps-time}}%
\caption{Depth (\ref{fig:uni-varyingEps-depth}), $\F$-dominance tests (\ref{fig:uni-varyingEps-fdom}) and time (\ref{fig:uni-varyingEps-time}) as spread $\varepsilon$ varies on \uniform.}\label{fig:varyingEps}%
\end{figure}

\medskip
\noindent\textbf{Varying the spread $\varepsilon$.}
The effect of $\F$, and, more precisely, of the constraints used on the weights to determine $\F$ is shown in Figure~\ref{fig:varyingEps}.
In particular, we vary the spread $\varepsilon$ of the constraints shown in~\eqref{eq:ratio-bounds} so that the $\nd_k$ operator ranges from a pure top-$k$ query (with just one linear scoring function) to a pure $k$-skyband query (with all possible linear scoring functions, which, as is well known~\cite{DBLP:journals/tods/CiacciaM20}, result-wise have the same power as all the monotone scoring functions).
Figure~\ref{fig:uni-varyingEps-depth} shows that, on the \uniform dataset with default parameter values, small values of $\varepsilon$ make $\nd_k$ deviate very little from the behavior of a top-$k$ query - and this is also confirmed in terms of $\F$-dominance tests (Figure~\ref{fig:uni-varyingEps-fdom}) and execution time (Figure~\ref{fig:uni-varyingEps-time}).
Some growth is visible starting at $\varepsilon=0.05$, and it's definitely evident for $\varepsilon=0.5$, where the depth nearly doubles with respect to the case $\varepsilon=0$ (\textsf{none}). However, we also observe that the computational toll is entirely ascribable to the shrinking phase, which requires more deepening to satisfy looser constraints.
We intentionally left out of the charts the case where $\varepsilon$ can vary freely (\textsf{full}), because there we experience an explosion in the depth (reaching $77.6\%$ vs just $2.8\%$ with $\varepsilon=0.5$) as well as in the $\F$-dominance tests (nearly $53$M vs $14$M) and execution times ($24.9$s vs $5.7$s), which would make the charts difficult to read.

\begin{figure}%
\centering
\subfloat[][{\texttt{depth}}]
{\includegraphics[width=0.33\textwidth]{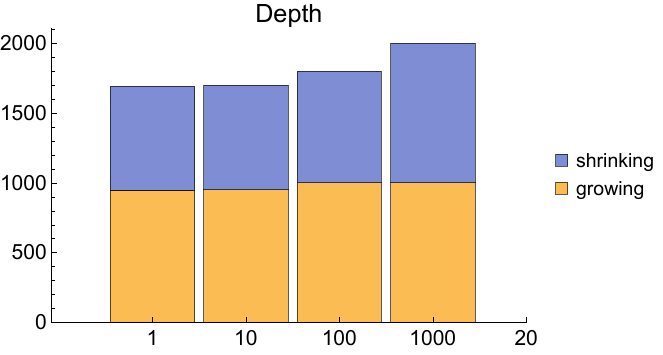}\label{fig:uni-varyingStep-depth}}%
\subfloat[][{$\F$-dominance tests}]
{\includegraphics[width=0.33\textwidth]{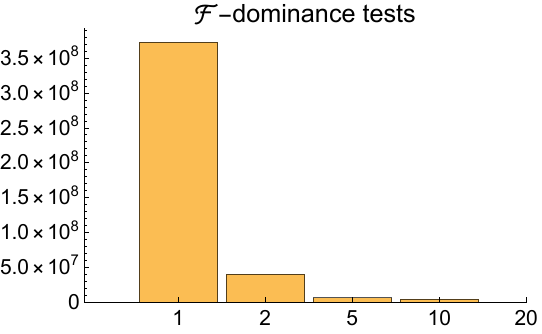}\label{fig:uni-varyingStep-fdom}}%
\subfloat[][{Time}]
{\includegraphics[width=0.33\textwidth]{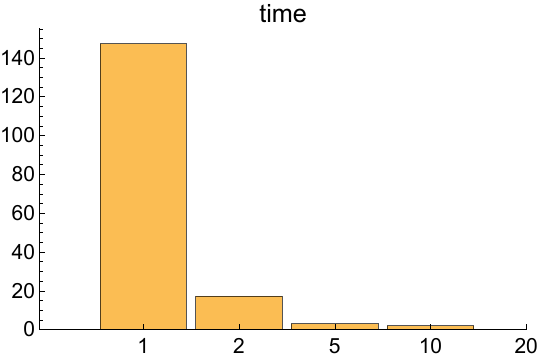}\label{fig:uni-varyingStep-time}}%
\caption{Depth (\ref{fig:uni-varyingStep-depth}), $\F$-dominance tests (\ref{fig:uni-varyingStep-fdom}) and time (\ref{fig:uni-varyingStep-time}) as step size $\step$ varies on \uniform.}\label{fig:varyingStep}%
\end{figure}

\medskip
\noindent\textbf{Varying the step $\step$.}
In order to reduce the number of times the stopping criterion is checked (which requires a high number of $\F$-dominance tests) one could try to increase the number of rows read by sorted access at once.
Normally, $\dimensions$ sorted accesses (one per list) are made and then the threshold-based stopping condition is checked (this happens during both the growing phase and the shrinking phase). Reducing the frequency of the checks to once per batch of accesses may entail a significant speed-up. Additionally, this behavior mimics the case of online services returning results in pages of a given size $\step$.
Figure~\ref{fig:varyingStep} shows the effect of varying $\step$ from the no-batch scenario $\step=1$ to $\step=1000$.
While $\step=1$ guarantees that the minimum depth will be attained during the execution, larger values will have looser guarantees on the depth, but might drastically reduce the number of incurred $\F$-dominance tests and, consequently, the execution time.
Figure~\ref{fig:uni-varyingStep-depth} shows that increasing $\step$ causes the final depth to be a multiple of the step $\step$ itself, but this negative effect may not be overall prevalent. Indeed, while with $\step=1$ we reach the minimum depth ($1691$), this only increases to $1700$ when $\step=10$ and to $1800$ when $\step=100$, while the largest increase is experienced for $\step=1000$, with a depth of $2000$. We observe that, while the depth is more or less stable for this dataset during the growing phase when $\step$ varies (with values ranging from $949$ to $1000$), larger changes are found during the shrinking phase (values from $742$ to $1000$).
However, the increase in depth is worthwhile if we look at the number of $\F$-dominance tests (Figure~\ref{fig:uni-varyingStep-fdom}) and execution times (Figure~\ref{fig:uni-varyingStep-time}): the number of $\F$-dominance tests plummets from $372$M when $\step=1$ to just $3.6$M when $\step=1000$ and times go from $147$s when $\step=1$ to just $2$s when $\step=1000$.
Due to the almost negligible difference between the cases $\step=100$ and $\step=1000$, we chose the former as the default value to use in the experiments, as it causes the lesser harm to depth.

\medskip
\noindent\textbf{Other datasets.}
As we mentioned, we also executed our experiments against the \anticorrelated family of datasets. However, due to the very nature of these datasets, an NRA-based approach like the one described in Algorithm~\ref{alg:nra-nd} is inherently ineffective. Indeed, even with the most favorable working conditions ($\dimensions=2$, $\size=10$K, $k=1$, $\varepsilon=0$, $\step=1000$), the depth explored by the algorithm is almost as large as the dataset size.
In particular, with this specific configuration, the depth was $90\%$ the dataset size, and required $58$M $\F$-dominance tests with an execution time of $27.3$s. Clearly, larger dataset sizes and less favorable conditions would determine a full scan of the dataset, with consequently higher execution times and numbers of $\F$-dominance tests.

\begin{figure}%
\centering
\subfloat[][{\texttt{depth}}]
{\includegraphics[width=0.33\textwidth]{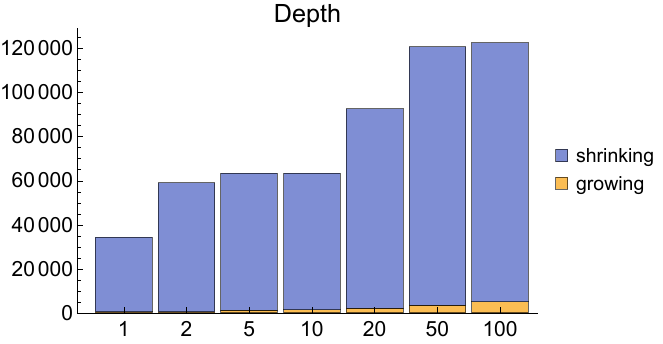}\label{fig:nba-varyingK-depth}}%
\subfloat[][{$\F$-dominance tests}]
{\includegraphics[width=0.33\textwidth]{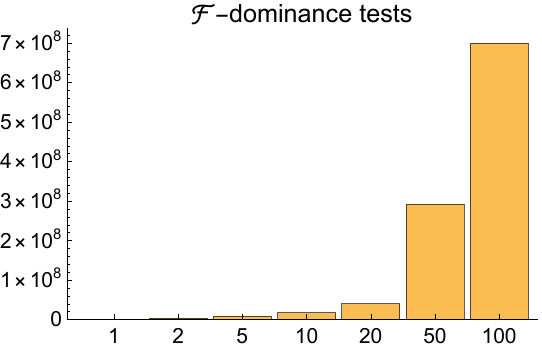}\label{fig:nba-varyingK-fdom}}%
\subfloat[][{Time}]
{\includegraphics[width=0.33\textwidth]{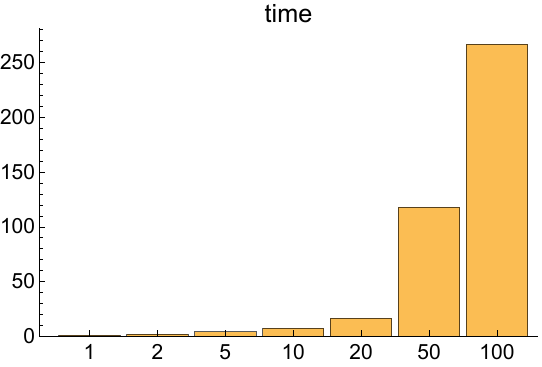}\label{fig:nba-varyingK-time}}%
\caption{Depth (\ref{fig:nba-varyingK-depth}), $\F$-dominance tests (\ref{fig:nba-varyingK-fdom}) and time (\ref{fig:nba-varyingK-time}) as $k$ varies on \nba.}\label{fig:nba-varyingK}%
\end{figure}

The real dataset $\nba$, instead, has a distribution that, albeit not uniform, is not anticorrelated either.
Figure~\ref{fig:nba-varyingK} shows the effect of $k$ on the $\nba$ dataset.
First of all, we observe that the shrinking phase is prevalent in terms of depth (Figure~\ref{fig:nba-varyingK-depth}): the total depth ranges from $\num{59000}$ to $\num{122600}$, while the depth relative to the growing phase only varies between $500$ and $5100$.
This has corresponding repercussions on the number of $\F$-dominance tests (Figure~\ref{fig:nba-varyingK-fdom}) and execution times (Figure~\ref{fig:nba-varyingK-time}), which however remain acceptable considering the dataset size and the adverse working conditions of an NRA-based algorithm.

\medskip
\noindent\textbf{Final observations.}
Our experiments show that the algorithmic scheme we proposed for computing $\nd_k$ is effective in computing the results, especially in scenarios regarding uniformly distributed data, which are more likely to allow early termination even in the absence of random access. In all other scenarios, particularly unfavorable configurations might be difficult to manage. In particular, when the growing phase first ends, the passage to the shrinking phase charges a heavy computational toll, since many tuples are in the buffer and many need to be removed, with non-negligible costs that are quadratic in the buffer size.

Our scheme proves to be versatile, in that it allows for a simple tuning of the batch size $\step$, which, while slightly worsening the total incurred depth, might heavily reduce the number of tests and, consequently, the overall execution times.

\section{Related Work}
\label{sec:related}

Three decades of work on top-$k$ queries and skyline queries have generated a large body of research, comprising numerous variants that have tried to enrich the expressivity of these queries and to overcome their main limitations.

This work lies at the culmination of a series of efforts to integrate and reconcile these two orthogonal approaches. Indeed, the $\nd_k$ operator may behave both as a top-$k$ query and as a $k$-skyband (and, ultimately, as a skyline).

Many algorithms for computing classical skylines exist in the centralized case~\cite{DBLP:conf/icde/BorzsonyiKS01,DBLP:conf/icde/ChomickiGGL03,DBLP:journals/tods/PapadiasTFS05}, but several works targeting a distributed setting are also available.
In particular, \emph{vertical partitioning}, which is also covered in the present work, was studied in~\cite{DBLP:conf/pods/Fagin98} for the so-called \emph{middleware scenario} and in several follow-up works~\cite{DBLP:conf/pods/FaginLN01,DBLP:conf/pods/SchnaitterP08}.
Many strategies also exist for \emph{horizontal partitioning}~\cite{DBLP:journals/tkde/CuiCXLSX09,DBLP:conf/edbt/MullesgaardPLZ14,ciaccia2024optimizationstrategiesparallelcomputation}.

Top-$k$ queries exist in a variety of formats~\cite{DBLP:journals/csur/IlyasBS08,DBLP:journals/pvldb/MartinenghiT10,
DBLP:journals/tods/MartinenghiT12,DBLP:journals/tkde/MartinenghiT12}, which have recently been hybridized with skylines so as to obtain the benefits of both paradigms~\cite{DBLP:journals/pvldb/CiacciaM17,DBLP:conf/sigmod/MouratidisL021}.
In particular, several previous approaches have tried to empower skylines with preferences and with the ability to limit their output~\cite{%
DBLP:journals/tods/CiacciaM20,DBLP:conf/cikm/CiacciaM18,
DBLP:conf/sigmod/MouratidisL021}.

Further limitations of top-$k$ queries reside in the expressivity of their scoring functions~\cite{DBLP:conf/sigmod/SolimanIMT11} and the robustness of their results~\cite{CM:PACMMOD2024,martinenghi2024parallelIndicators}, which the advent of flexible skylines has tried to address.

We also observe that both skylines and ranking queries are commonly included as typical parts of complex data preparation pipelines for subsequent processing based, e.g., on Machine Learning or Clustering algorithms~\cite{DBLP:conf/fqas/Masciari09,DBLP:journals/isci/MasciariMZ14,DBLP:conf/ideas/FazzingaFMF09,DBLP:journals/tods/FazzingaFFM13} as well as crowdsourcing applications~\cite{DBLP:conf/socialcom/GalliFMTN12,DBLP:conf/www/BozzonCCFMT12}.

Finally, we point out that the access constraints imposed by sorted accesses or random accesses (when available) are akin to the access limitations that have been studied extensively in the field of Web data access~\cite{%
DBLP:conf/er/CaliM08,DBLP:conf/icde/CaliM08,DBLP:conf/edbt/CaliM10,DBLP:journals/jucs/CaliCM09}.

\section{Conclusion}
\label{sec:conclusion}

In this paper, we studied the problem of computing the non-$k$-dominated flexible skyline -- a complex, skyline-based operator that encompasses common characteristics of top-$k$ queries and skyline queries, and is based on a set of scoring functions $\F$, instead of just one, as in the case of top-$k$ queries, or none, as in the case of skyline queries. In particular, we studied the application scenario in which data are vertically distributed (the so-called ``middleware scenario'') in several ranked lists and random access is not available, so that data can only be accessed, from the lists, from top to bottom.
We propose an algorithm for computing the results in two phases: a growing phase, in which all candidate results are incorporated into a buffer, and a shrinking phase, in which all tuples that are not part of the final result are removed from the buffer.
Our algorithmic scheme is not only correct, but also instance-optimal within the class of algorithms that make no random access.

We conducted extensive experiments on various configurations, targeting different datasets. We observed that, in adverse conditions, an algorithm that cannot exploit random access tends to need to consume the entire dataset, with little or no chance of experiencing an early exit. In more convenient scenarios, for instance those regarding non-anticorrelated data, we obtain acceptable execution conditions and can even exploit optimization opportunities granting a good trade-off between the final depth of the execution (i.e., a measure of the I/O cost) and the number of tests that need to be performed in order to compute the results.

Future work may try to further optimize the execution by leveraging other optimization opportunities that have been adopted for scenarios in which random access was available, such as memoization-based techniques.
Another interesting line of research regards the computation, in a no-random-access scenario, of other flexible skyline variants, such as the $\po_k$ operator.

\medskip

\noindent\textbf{Acknowledgments.} The author wishes to thank Paolo Ciaccia for insightful discussion on the manuscript.

\bibliographystyle{plain}

\begin{thebibliography}{10}

\bibitem{DBLP:conf/icde/BorzsonyiKS01}
Stephan B{\"{o}}rzs{\"{o}}nyi, Donald Kossmann, and Konrad Stocker.
\newblock The skyline operator.
\newblock In {\em Proceedings of the 17th International Conference on Data
  Engineering, April 2-6, 2001, Heidelberg, Germany}, pages 421--430, 2001.

\bibitem{DBLP:conf/www/BozzonCCFMT12}
Alessandro Bozzon, Ilio Catallo, Eleonora Ciceri, Piero Fraternali, Davide
  Martinenghi, and Marco Tagliasacchi.
\newblock A framework for crowdsourced multimedia processing and querying.
\newblock In Ricardo Baeza{-}Yates, Stefano Ceri, Piero Fraternali, and Fausto
  Giunchiglia, editors, {\em Proceedings of the First International Workshop on
  Crowdsourcing Web Search, Lyon, France, April 17, 2012}, volume 842 of {\em
  {CEUR} Workshop Proceedings}, pages 42--47. CEUR-WS.org, 2012.

\bibitem{DBLP:journals/jucs/CaliCM09}
Andrea Cal{\`{\i}}, Diego Calvanese, and Davide Martinenghi.
\newblock Dynamic query optimization under access limitations and dependencies.
\newblock {\em J. Univers. Comput. Sci.}, 15(1):33--62, 2009.

\bibitem{DBLP:conf/er/CaliM08}
Andrea Cal{\`{\i}} and Davide Martinenghi.
\newblock Conjunctive query containment under access limitations.
\newblock In Qing Li, Stefano Spaccapietra, Eric S.~K. Yu, and Antoni
  Oliv{\'{e}}, editors, {\em Conceptual Modeling - {ER} 2008, 27th
  International Conference on Conceptual Modeling, Barcelona, Spain, October
  20-24, 2008. Proceedings}, volume 5231 of {\em Lecture Notes in Computer
  Science}, pages 326--340. Springer, 2008.

\bibitem{DBLP:conf/icde/CaliM08}
Andrea Cal{\`{\i}} and Davide Martinenghi.
\newblock Querying data under access limitations.
\newblock In Gustavo Alonso, Jos{\'{e}}~A. Blakeley, and Arbee L.~P. Chen,
  editors, {\em Proceedings of the 24th International Conference on Data
  Engineering, {ICDE} 2008, April 7-12, 2008, Canc{\'{u}}n, Mexico}, pages
  50--59. {IEEE} Computer Society, 2008.

\bibitem{DBLP:conf/edbt/CaliM10}
Andrea Cal{\`{\i}} and Davide Martinenghi.
\newblock Querying the deep web.
\newblock In Ioana Manolescu, Stefano Spaccapietra, Jens Teubner, Masaru
  Kitsuregawa, Alain L{\'{e}}ger, Felix Naumann, Anastasia Ailamaki, and Fatma
  {\"{O}}zcan, editors, {\em {EDBT} 2010, 13th International Conference on
  Extending Database Technology, Lausanne, Switzerland, March 22-26, 2010,
  Proceedings}, volume 426 of {\em {ACM} International Conference Proceeding
  Series}, pages 724--727. {ACM}, 2010.

\bibitem{DBLP:conf/icde/ChomickiGGL03}
Jan Chomicki, Parke Godfrey, Jarek Gryz, and Dongming Liang.
\newblock Skyline with presorting.
\newblock In Umeshwar Dayal, Krithi Ramamritham, and T.~M. Vijayaraman,
  editors, {\em Proceedings of the 19th International Conference on Data
  Engineering, March 5-8, 2003, Bangalore, India}, pages 717--719, 2003.

\bibitem{DBLP:journals/pvldb/CiacciaM17}
Paolo Ciaccia and Davide Martinenghi.
\newblock Reconciling skyline and ranking queries.
\newblock {\em {PVLDB}}, 10(11):1454--1465, 2017.

\bibitem{DBLP:conf/cikm/CiacciaM18}
Paolo Ciaccia and Davide Martinenghi.
\newblock {FA} + {TA} {\textless} {FSA}: Flexible score aggregation.
\newblock In {\em Proceedings of the 27th {ACM} International Conference on
  Information and Knowledge Management, {CIKM} 2018, Torino, Italy, October
  22-26, 2018}, pages 57--66, 2018.

\bibitem{DBLP:journals/tods/CiacciaM20}
Paolo Ciaccia and Davide Martinenghi.
\newblock Flexible skylines: Dominance for arbitrary sets of monotone
  functions.
\newblock {\em {ACM} Trans. Database Syst.}, 45(4):18:1--18:45, 2020.

\bibitem{CM:PACMMOD2024}
Paolo Ciaccia and Davide Martinenghi.
\newblock {Directional Queries: Making Top-k Queries More Effective in
  Discovering Relevant Results}.
\newblock {\em Proc. {ACM} Manag. Data}, 2(6), 2024.

\bibitem{ciaccia2024optimizationstrategiesparallelcomputation}
Paolo Ciaccia and Davide Martinenghi.
\newblock Optimization strategies for parallel computation of skylines.
\newblock {\em CoRR}, 2411.14968, 2024.

\bibitem{DBLP:journals/tkde/CuiCXLSX09}
Bin Cui, Lijiang Chen, Linhao Xu, Hua Lu, Guojie Song, and Quanqing Xu.
\newblock Efficient skyline computation in structured peer-to-peer systems.
\newblock {\em {IEEE} Trans. Knowl. Data Eng.}, 21(7):1059--1072, 2009.

\bibitem{DBLP:conf/pods/Fagin96}
Ronald Fagin.
\newblock Combining fuzzy information from multiple systems.
\newblock In {\em PODS}, pages 216--226, 1996.

\bibitem{DBLP:conf/pods/Fagin98}
Ronald Fagin.
\newblock Fuzzy queries in multimedia database systems.
\newblock In Alberto~O. Mendelzon and Jan Paredaens, editors, {\em Proceedings
  of the Seventeenth {ACM} {SIGACT-SIGMOD-SIGART} Symposium on Principles of
  Database Systems, June 1-3, 1998, Seattle, Washington, {USA}}, pages 1--10,
  1998.

\bibitem{DBLP:conf/pods/FaginLN01}
Ronald Fagin, Amnon Lotem, and Moni Naor.
\newblock Optimal aggregation algorithms for middleware.
\newblock In {\em PODS}, 2001.

\bibitem{DBLP:journals/tods/FazzingaFFM13}
Bettina Fazzinga, Sergio Flesca, Filippo Furfaro, and Elio Masciari.
\newblock Rfid-data compression for supporting aggregate queries.
\newblock {\em {ACM} Trans. Database Syst.}, 38(2):11, 2013.

\bibitem{DBLP:conf/ideas/FazzingaFMF09}
Bettina Fazzinga, Sergio Flesca, Elio Masciari, and Filippo Furfaro.
\newblock Efficient and effective {RFID} data warehousing.
\newblock In Bipin~C. Desai, Domenico Sacc{\`{a}}, and Sergio Greco, editors,
  {\em International Database Engineering and Applications Symposium {(IDEAS}
  2009), September 16-18, 2009, Cetraro, Calabria, Italy}, {ACM} International
  Conference Proceeding Series, pages 251--258, 2009.

\bibitem{DBLP:conf/socialcom/GalliFMTN12}
Luca Galli, Piero Fraternali, Davide Martinenghi, Marco Tagliasacchi, and
  Jasminko Novak.
\newblock A draw-and-guess game to segment images.
\newblock In {\em 2012 International Conference on Privacy, Security, Risk and
  Trust, {PASSAT} 2012, and 2012 International Confernece on Social Computing,
  SocialCom 2012, Amsterdam, Netherlands, September 3-5, 2012}, pages 914--917.
  {IEEE} Computer Society, 2012.

\bibitem{DBLP:journals/csur/IlyasBS08}
Ihab~F. Ilyas, George Beskales, and Mohamed~A. Soliman.
\newblock A survey of top-\emph{k} query processing techniques in relational
  database systems.
\newblock {\em {ACM} Comput. Surv.}, 40(4), 2008.

\bibitem{DBLP:journals/tods/MamoulisYCC07}
Nikos Mamoulis, Man~Lung Yiu, Kit~Hung Cheng, and David~W. Cheung.
\newblock Efficient top-\emph{k} aggregation of ranked inputs.
\newblock {\em {ACM} Trans. Database Syst.}, 32(3):19, 2007.

\bibitem{martinenghi2024parallelIndicators}
Davide Martinenghi.
\newblock Parallelizing the computation of robustness for measuring the
  strength of tuples.
\newblock {\em CoRR}, 2412.02274, 2024.

\bibitem{DBLP:journals/pvldb/MartinenghiT10}
Davide Martinenghi and Marco Tagliasacchi.
\newblock Proximity rank join.
\newblock {\em Proc. {VLDB} Endow.}, 3(1):352--363, 2010.

\bibitem{DBLP:journals/tkde/MartinenghiT12}
Davide Martinenghi and Marco Tagliasacchi.
\newblock Cost-aware rank join with random and sorted access.
\newblock {\em {IEEE} Trans. Knowl. Data Eng.}, 24(12):2143--2155, 2012.

\bibitem{DBLP:journals/tods/MartinenghiT12}
Davide Martinenghi and Marco Tagliasacchi.
\newblock Proximity measures for rank join.
\newblock {\em {ACM} Trans. Database Syst.}, 37(1):2:1--2:46, 2012.

\bibitem{DBLP:conf/fqas/Masciari09}
Elio Masciari.
\newblock Trajectory clustering via effective partitioning.
\newblock In Troels Andreasen, Ronald~R. Yager, Henrik Bulskov, Henning
  Christiansen, and Henrik~Legind Larsen, editors, {\em Flexible Query
  Answering Systems, 8th International Conference, {FQAS} 2009, Roskilde,
  Denmark, October 26-28, 2009. Proceedings}, volume 5822 of {\em Lecture Notes
  in Computer Science}, pages 358--370, 2009.

\bibitem{DBLP:journals/isci/MasciariMZ14}
Elio Masciari, Giuseppe~Massimiliano Mazzeo, and Carlo Zaniolo.
\newblock Analysing microarray expression data through effective clustering.
\newblock {\em Inf. Sci.}, 262:32--45, 2014.

\bibitem{DBLP:conf/sigmod/MouratidisL021}
Kyriakos Mouratidis, Keming Li, and Bo~Tang.
\newblock Marrying top-k with skyline queries: Relaxing the preference input
  while producing output of controllable size.
\newblock In Guoliang Li, Zhanhuai Li, Stratos Idreos, and Divesh Srivastava,
  editors, {\em {SIGMOD} '21: International Conference on Management of Data,
  Virtual Event, China, June 20-25, 2021}, pages 1317--1330, 2021.

\bibitem{DBLP:conf/edbt/MullesgaardPLZ14}
Kasper Mullesgaard, Jens~Laurits Pederseny, Hua Lu, and Yongluan Zhou.
\newblock Efficient skyline computation in mapreduce.
\newblock In Sihem Amer{-}Yahia, Vassilis Christophides, Anastasios
  Kementsietsidis, Minos~N. Garofalakis, Stratos Idreos, and Vincent Leroy,
  editors, {\em Proceedings of the 17th International Conference on Extending
  Database Technology, {EDBT} 2014, Athens, Greece, March 24-28, 2014}, pages
  37--48, 2014.

\bibitem{DBLP:journals/tods/PapadiasTFS05}
Dimitris Papadias, Yufei Tao, Greg Fu, and Bernhard Seeger.
\newblock Progressive skyline computation in database systems.
\newblock {\em {TODS}}, 30(1):41--82, 2005.

\bibitem{DBLP:conf/pods/SchnaitterP08}
Karl Schnaitter and Neoklis Polyzotis.
\newblock Evaluating rank joins with optimal cost.
\newblock In Maurizio Lenzerini and Domenico Lembo, editors, {\em Proceedings
  of the Twenty-Seventh {ACM} {SIGMOD-SIGACT-SIGART} Symposium on Principles of
  Database Systems, {PODS} 2008, June 9-11, 2008, Vancouver, BC, Canada}, pages
  43--52. {ACM}, 2008.

\bibitem{DBLP:conf/sigmod/SolimanIMT11}
Mohamed~A. Soliman, Ihab~F. Ilyas, Davide Martinenghi, and Marco Tagliasacchi.
\newblock Ranking with uncertain scoring functions: semantics and sensitivity
  measures.
\newblock In {\em Proceedings of the {ACM} {SIGMOD} International Conference on
  Management of Data, {SIGMOD} 2011, Athens, Greece, June 12-16, 2011}, pages
  805--816, 2011.

\end{thebibliography}

\end{document}